
\documentclass[oneside,10pt]{article}
\usepackage{amsmath,amssymb,amsthm,amsfonts,amsbsy,latexsym} 
\usepackage{dsfont}
\usepackage{graphicx}

\newcommand{\bdm}{\begin{displaymath}}
\newcommand{\edm}{\end{displaymath}}
\newcommand{\bea}{\begin{eqnarray}}
\newcommand{\eea}{\end{eqnarray}}
\newcommand{\beas}{\begin{eqnarray*}}
\newcommand{\eeas}{\end{eqnarray*}}
\newcommand{\bay}{\begin{array}{c}}
\newcommand{\eay}{\end{array}}
\newcommand{\ben}{\begin{enumerate}}
\newcommand{\een}{\end{enumerate}}
\newcommand{\be}{\begin{equation}}
\newcommand{\ee}{\end{equation}}

\newcommand{\llaa}{\left\langle\hspace{-0.2cm}\left\langle}
\newcommand{\rraa}{\right\rangle\hspace{-0.2cm}\right\rangle}
\newcommand{\laa}{\langle\hspace{-0.08cm}\langle}
\newcommand{\raa}{\rangle\hspace{-0.08cm}\rangle}

\newcommand{\LZ}{L^2(\mathbb{R}^3\to\mathbb{C})}
\newcommand{\LZN}{L^2(\mathbb{R}^{3N}\to\mathbb{C})}
\newcommand{\LE}{L^1(\mathbb{R}^3\to\mathbb{C})}


\newcommand{\landau}{\mbox{\begin{scriptsize}$\mathcal{O}$\end{scriptsize}}}
\newcommand{\mlf}{\widehat{m}^{\lambda,\phi}}

\newtheorem{theorem}{Theorem}[section]
\newtheorem{lemma}[theorem]{Lemma}
\newtheorem{assumption}[theorem]{Assumption}
\newtheorem*{notation}{Notation}

\newtheorem*{remark}{Remark}
\newtheorem{definition}[theorem] {Definition}
\newtheorem{proposition}[theorem]{Proposition}

\renewcommand{\phi}{\varphi}

\begin{document}

\title{Derivation of the time dependent Gross-Pitaevskii equation without positivity condition on the interaction}

\author{Peter Pickl\footnote{
Institute of Theoretical Physics, ETH H\"onggerberg, CH-8093
Z\"urich, Switzerland}}

\maketitle

\begin{abstract}

Using a new method \cite{pickl} it is possible to derive mean field equations from the microscopic $N$ body Schr\"odinger evolution of interacting particles
 without using BBGKY
 hierarchies. 

In this paper we wish to analyze scalings which lead to the Gross-Pitaevskii equation which is usually derived assuming positivity
of the interaction \cite{erdos1,erdos2}. The new method for dealing with mean field limits presented in \cite{pickl} allows us to relax this
condition. The price we have to pay for this relaxation is however that we have to restrict the scaling behavior to $\beta<1/3$
and that we have to assume fast convergence of the reduced one particle marginal density matrix of the initial wave function $\mu^{\Psi_0}$ 
to a pure state $|\phi_0\rangle\langle\phi_0|$.
\end{abstract}

\section{Introduction}

We are interested in solutions of the $N$-particle Schr\"odinger
equation \be\label{schroe}
i\dot \Psi_N^t = H_N\Psi_N^t \ee with symmetric $\Psi_N^0$ we
shall specify below and the Hamiltonian \be\label{hamiltonian}
 H_N=-\sum_{j=1}^N \Delta_j+\sum_{1\leq j< k\leq N} v_N^\beta(x_j-x_k) +\sum _{j=1}^N A^t(x_j)
 \ee
acting on the Hilbert space $\LZN$. $\beta\in\mathbb{R}$
stands for the scaling behavior of the interaction. The $v_N^\beta$
we wish to analyze scale with the particle number in such a way that the 
interaction energy per particle is of order one. We choose an interaction which is given
by 
\begin{assumption}\label{v}
$$v_N^\beta(x)= N^{-1+3\beta} v(N^\beta x)$$ with compactly supported, spherically symmetric $v\in L^\infty$. 
\end{assumption}

The trap potential $A^t$ 
does not depend on $N$.  $H_N$ conserves symmetry, i.e. 
any symmetric function $\Psi_N^0$ evolves into a symmetric function
$\Psi_N^t$.

Assume that the initial wave functions $\Psi_N^0\approx\prod_{j=1}^N\phi^t(x_j)$ where $\phi^0\in L^2$ and that the Gross-Pitaevskii equation
\be\label{meanfield}i\dot
\phi^t=-\left(\Delta+A^t+a|\phi^t|^2\right)\phi^t\ee with $a=\int v(x)d^3x$ has a
solution. We
shall show that  also
$\Psi_N^t\approx\prod_{j=1}^N\phi^t(x_j)$ as $N\to\infty$.

The focus of this paper is on interactions which need not be positive. The price we have to pay  is that we have to assume comparably fast convergence of the reduced one particle marginal density matrix of 
the initial wave function $\mu^{\Psi_0}$ 
to a pure state $|\phi_0\rangle\phi_0|$. Furthermore we have to restrict the scaling 
behavior of the interaction to $\beta<1/3$. 

As it seems one needs these assumptions not only for technical reasons. Without them
there might be regimes where clustering of the particle leads to a break down of the Gross-Pitaevskii description. It is clear that such a clustering can be avoided
by assuming a high purity of the condensate (i.e. fast convergence of $\mu^{\Psi_0}$ 
to $|\phi_0\rangle\phi_0|$) and moderate scaling behavior of the interaction.

\section{Counting the bad particles}

We wish to control the number of bad particles in the condensate (i.e. the particles not in the state $\phi^t$) using the method presented in \cite{pickl}.
Following \cite{pickl} we need to define some projectors first which we will do next. We shall also give some general properties of these projectors before turning to the special case of deriving the Gross-Pitaevskii equation.
\begin{definition}\label{defpro}
Let $\phi\in\LZ$.
\begin{enumerate}
\item For any $1\leq j\leq N$ the
projectors $p_j^\phi:\LZN\to\LZN$ and $q_j^\phi:\LZN\to\LZN$ are given by
\beas p_j^\phi\Psi_N=\phi(x_j)\int\phi^*(x_j)\Psi_N(x_1,\ldots,x_N)d^3x_j\;\;\;\forall\;\Psi_N\in\LZN
\eeas
and $q_j^\phi=1-p_j^\phi$. 

We shall also use the bra-ket notation
$p_j^\phi=|\phi(x_j)\rangle\langle\phi(x_j)|$.
\item 
For any $0\leq k\leq j\leq N$ we define the set $$\mathcal{A}_k^j:=\{(a_1,a_2,\ldots,a_j): a_l\in\{0,1\}\;;\;
\sum_{l=1}^j a_l=k\}$$ and the orthogonal projector $P_{j,k}^\phi$
acting on $\LZN$ as
$$P_{j,k}^\phi:=\sum_{a\in\mathcal{A}_k^j}\prod_{l=1}^j\big(p_{N-j+l}^{\phi}\big)^{1-a_l} \big(q_{N-j+l}^{\phi}\big)^{a_l}$$
and denote the special case $j=N$ by $P_k^\phi:=P_{N,k}^\phi$. For
negative $k$ and $k>j$ we set $P_{j,k}^\phi:=0$.
\item
For any function $f:\{0,1,\ldots,N\}\to\mathbb{R}^+_0$ we define the
operator $\widehat{f}^{\phi}:\LZN\to\LZN$ as
\be\label{hut}\widehat{f}^{\phi}:=\sum_{j=0}^{N} f(j)P_j^\phi\;.\ee
We shall also need the shifted operators
$\widehat{f}^{\phi}_d:\LZN\to\LZN$ given by
$$\widehat{f}^{\phi}_d:=\sum_{j=d}^{N+d} f(j+d)P_j^\phi\;.$$
\end{enumerate}
\end{definition}
\begin{notation}
Throughout the paper hats $\;\widehat{\cdot}\;$ shall solemnly be
used in the sense of Definition \ref{defpro} (c). The label $n$ shall always be used for the function $n(k)=\sqrt{k/N}$.
\end{notation}
With Definition \ref{defpro} we arrive directly at the following Lemma
based on combinatorics of the $p_j^\phi$ and $q_j^\phi$:
\begin{lemma}\label{kombinatorik}
\begin{enumerate}
\item For any functions $f,g\{0,1,\ldots,N\}\to\mathbb{R}^+_0$ we have
that
$$\widehat{f}\widehat{g}=\widehat{fg}=\widehat{g}\widehat{f}\;\;\;\;\;\;\;\;\;\;\widehat{f}p_j=p_j\widehat{f}\;\;\;\;\;\;\;\;\;\;\widehat{f}P_{j,k}=P_{j,k}\widehat{f}\;.$$
\item Let $n:\{0,1,\ldots,N\}\to\mathbb{R}^+_0$ be given by $n(k):=\sqrt{k/N}$.
Then the square of $\widehat{n}^{\phi}$ (c.f. (\ref{hut}))
equals the relative particle number operator of particles not in the
state $\phi$, i.e.
$$\left(\widehat{n}^{\phi}\right)^2=N^{-1}\sum_{j=1}^Nq_j^\phi\;.$$
\item For any $f:\{0,1,\ldots,N\}\to\mathbb{R}^+_0$ and any symmetric $\Psi\in\LZN$  \bea\label{komb1}
\left\| \widehat{f}^\phi q^\phi_1\Psi\right\|^2&=&
\|\widehat{f}^\phi\widehat{n}^{\phi}\Psi\|^2\\
\label{komb2} \left\| \widehat{f}^\phi
q^\phi_1q^\phi_2\Psi\right\|^2&\leq&
\frac{N}{N-1}\|\widehat{f}^\phi(\widehat{n}^{\phi})^2\Psi\|^2\;.\eea
\item For any function $f:\{0,1,\ldots,N\}\to\mathbb{R}^+_0$, any function $v:\mathbb{R}^6\to\mathbb{R}$ and any
$j,k=0,1,2$ we have $$\widehat{f}^\phi Q^\phi_j v(x_1,x_2)Q^\phi_k=
Q^\phi_j v(x_1,x_2)\widehat{f}^\phi_{j-k}Q^\phi_k\;,$$ where
$Q^\phi_0:=p^\phi_1 p^\phi_2$, $Q^\phi_1:=p^\phi_1q^\phi_2$ and
$Q^\phi_2:=q^\phi_1q^\phi_2$.
\end{enumerate}
\end{lemma}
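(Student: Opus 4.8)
The plan is to verify each of the four identities by reducing everything to the combinatorial action of the $p_j^\phi$, $q_j^\phi$ on the eigenspaces of the $P_{j,k}^\phi$. The key structural fact is that $\LZN$ decomposes (after symmetrization in the last $j$ coordinates, or using the full symmetry of $\Psi$) into the ranges of the mutually orthogonal projectors $P_k^\phi$, $0\le k\le N$, and that $p_j^\phi$ applied to a vector in $\operatorname{ran}P_k^\phi$ either leaves it in $\operatorname{ran}P_k^\phi$ (if the $j$-th factor already carried a $p$) or shifts it into $\operatorname{ran}P_{k-1}^\phi$; dually for $q_j^\phi$ with a shift by $+1$. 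I would set up this bookkeeping once at the start.

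For part (a), since each $P_j^\phi$ is a spectral projector and $\widehat f^\phi=\sum_j f(j)P_j^\phi$, the relations $P_j^\phi P_k^\phi=\delta_{jk}P_j^\phi$ give $\widehat f^\phi\widehat g^\phi=\sum_j f(j)g(j)P_j^\phi=\widehat{fg}^\phi$ and commutativity is immediate. For $\widehat f^\phi p_j^\phi=p_j^\phi\widehat f^\phi$ and $\widehat f^\phi P_{j,k}^\phi=P_{j,k}^\phi\widehat f^\phi$ I would note that $p_j^\phi$ and $P_{j,k}^\phi$ are built out of the same one-particle projectors $p_l^\phi,q_l^\phi$ that define the $P_k^\phi$, and that $P_k^\phi$ is symmetric in all $N$ particle labels, so conjugating a single $p_j^\phi$ by the symmetrization does not change the occupation count; more concretely, $P_k^\phi$ commutes with each $p_j^\phi$ because $p_j^\phi$ preserves the number of $q$'s among the remaining coordinates and acts as $0$ or $1$ on the $j$-th. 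For part (b), expand $\big(\widehat n^\phi\big)^2=\sum_k (k/N)P_k^\phi=N^{-1}\sum_k k\,P_k^\phi$; on the other hand $\sum_{j=1}^N q_j^\phi=\sum_k k\,P_k^\phi$ since on $\operatorname{ran}P_k^\phi$ exactly $k$ of the $q_j^\phi$ act as the identity and the rest as zero (summing the diagonal over the $a\in\mathcal A_k^N$ decomposition). Dividing by $N$ gives the claim.

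For part (c), I would use the symmetry of $\Psi$ together with part (b). For \eqref{komb1}: $\|\widehat f^\phi q_1^\phi\Psi\|^2=\langle\Psi,q_1^\phi(\widehat f^\phi)^2 q_1^\phi\Psi\rangle$, and by symmetry this equals $N^{-1}\sum_{j=1}^N\langle\Psi,q_j^\phi(\widehat f^\phi)^2 q_j^\phi\Psi\rangle$; since $(\widehat f^\phi)^2$ commutes with each $q_j^\phi$ by part (a), this is $\langle\Psi,(\widehat f^\phi)^2\,N^{-1}\sum_j q_j^\phi\,\Psi\rangle=\langle\Psi,(\widehat f^\phi)^2(\widehat n^\phi)^2\Psi\rangle=\|\widehat f^\phi\widehat n^\phi\Psi\|^2$. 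For \eqref{komb2}, the same symmetrization over ordered pairs $(j,k)$, $j\ne k$, gives $\|\widehat f^\phi q_1^\phi q_2^\phi\Psi\|^2=\frac{1}{N(N-1)}\sum_{j\ne k}\langle\Psi,q_j^\phi q_k^\phi(\widehat f^\phi)^2q_j^\phi q_k^\phi\Psi\rangle=\frac{1}{N(N-1)}\langle\Psi,(\widehat f^\phi)^2\big(\sum_j q_j^\phi\big)^2\Psi\rangle-\frac{1}{N(N-1)}\langle\Psi,(\widehat f^\phi)^2\sum_j q_j^\phi\Psi\rangle$, using $\sum_{j\ne k}q_j^\phi q_k^\phi=(\sum_j q_j^\phi)^2-\sum_j q_j^\phi$ (the $(q_j^\phi)^2=q_j^\phi$ identity). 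The first term is $\frac{N}{N-1}\|\widehat f^\phi(\widehat n^\phi)^2\Psi\|^2$ and the second is manifestly $\ge 0$, which gives the inequality. Part (d) is the $j\ne k$ shifting statement: writing $v(x_1,x_2)Q_k^\phi$, note $v(x_1,x_2)$ only touches coordinates $1,2$, so it changes the occupation count among $\{1,2\}$ by a controlled amount while leaving coordinates $3,\dots,N$ fixed; $Q_j^\phi$ on the left then forces the count among $\{1,2\}$ to be $j$ afterwards. Counting: if the global count before $v$ acts is $m$, with $k$ of it sitting in $\{1,2\}$ (i.e. on $\operatorname{ran}Q_k^\phi$), then the count outside is $m-k$, and after applying $v$ and projecting with $Q_j^\phi$ the global count becomes $(m-k)+j=m+(j-k)$. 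Hence $\widehat f^\phi$ acting on the left sees argument $m$, while pushing it to the right of $v$ turns it into $\widehat f^\phi_{j-k}$ acting with argument $m+(j-k)$ shifted back by $j-k$; this is exactly the asserted identity. I would make this precise by inserting $\sum_m P_m^\phi$ on both sides and matching terms.

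The main obstacle is purely notational rather than conceptual: one must be careful that $p_j^\phi$ and $q_j^\phi$ for a \emph{fixed} index $j$ do \emph{not} individually commute with $\widehat f^\phi$ as an operator identity on non-symmetric vectors unless one is summing over $j$ or using the full symmetry — so the symmetrization step in (c) has to be invoked explicitly, and in (d) one must track that $v(x_1,x_2)$ does not act like a nice shift on a single $P_m^\phi$ but only becomes tractable when sandwiched between the $Q_j^\phi,Q_k^\phi$ which pin down the local occupation. Getting the index arithmetic in the shifted hats $\widehat f^\phi_d$ right (and checking the edge cases where $P_{j,k}^\phi=0$ for $k<0$ or $k>j$, which make the shifted sums well-defined) is the one place where care is genuinely needed.
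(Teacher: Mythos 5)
Your argument is correct and, modulo wording, it is the paper's own proof: (a) is spectral-calculus plus the observation that $P_k^\phi$ is built from the mutually commuting $p_l^\phi,q_l^\phi$; (b) is $\sum_j q_j^\phi P_k^\phi = k\,P_k^\phi$; (c) is the symmetrization of a single $q_1^\phi$ (resp.\ a pair $q_1^\phi q_2^\phi$) into an average over all coordinates, which produces $(\widehat n^\phi)^2$; (d) is the index-shift bookkeeping through $P_{N-2,\cdot}^\phi$. In (c) you run the chain from $\|\widehat f^\phi q_1^\phi\Psi\|^2$ forward to $\|\widehat f^\phi\widehat n^\phi\Psi\|^2$ whereas the paper goes the other way, and in (komb2) you use $\sum_{j\neq k}q_j^\phi q_k^\phi=\big(\sum_j q_j^\phi\big)^2-\sum_j q_j^\phi$ where the paper writes $N^{-2}\sum_{j,k}q_j^\phi q_k^\phi=\tfrac{N-1}{N}q_1^\phi q_2^\phi+N^{-1}q_1^\phi$ after symmetrizing; these are the same identity.

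One statement in your closing paragraph is wrong and worth flagging because it contradicts part (a), which you yourself proved: you write that $p_j^\phi$ and $q_j^\phi$ for a fixed $j$ ``do not individually commute with $\widehat f^\phi$ as an operator identity on non-symmetric vectors.'' They do. $\widehat f^\phi=\sum_k f(k)P_k^\phi$ and every summand of every $P_k^\phi$ is a product $\prod_l (p_l^\phi)^{1-a_l}(q_l^\phi)^{a_l}$, each factor of which commutes with the fixed $p_j^\phi$ (trivially for $l\neq j$, and because $p_j^\phi p_j^\phi=p_j^\phi$, $p_j^\phi q_j^\phi=q_j^\phi p_j^\phi=0$ for $l=j$). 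Hence $\widehat f^\phi p_j^\phi=p_j^\phi\widehat f^\phi$ on all of $\LZN$, with no symmetry needed — this is exactly what part (a) asserts. The place where symmetry of $\Psi$ genuinely enters in (c) is not a commutation step but the replacement of $q_1^\phi$ by $N^{-1}\sum_j q_j^\phi$ (and of $q_1^\phi q_2^\phi$ by the pair average) inside the expectation $\laa\Psi,\cdot\,\Psi\raa$; your actual computation does this correctly, so the error is only in the commentary. Finally, in (d) the phrase ``$\widehat f^\phi$ acting on the left sees argument $m$'' should read ``$m+(j-k)$'' (the count after $Q_j^\phi$), and if you carry out the ``insert $\sum_m P_m^\phi$'' plan you will want the intermediate identity $P_l^\phi Q_j^\phi=P_{N-2,l-j}^\phi Q_j^\phi$ so that the piece of the projector that must be moved past $v(x_1,x_2)$ acts only on coordinates $3,\ldots,N$; that is the mechanism the paper uses and it is what makes the reindexing clean.
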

\begin{proof}
(a) follows immediately from Definition \ref{defpro}, using that $p_j$
and $q_j$ are orthogonal projectors.

For (b) note that $\cup_{k=0}^N\mathcal{A}_k=\{0,1\}^N$, so $1=\sum_{k=0}^N P_k^\phi$. Using also
$(q_k^\phi)^2=q_k^\phi$ and $q_k^\phi p_k^\phi=0$ we get \beas
N^{-1}\sum_{k=1}^Nq_k^\phi&=&N^{-1}\sum_{k=1}^Nq_k^\phi\sum_{j=0}^N
P_j^\phi= N^{-1}\sum_{j=0}^N\sum_{k=1}^Nq_k^\phi
P_j^\phi=N^{-1}\sum_{j=0}^Nj P_j^\phi\eeas and (b) follows.

Let $\laa\cdot,\cdot\raa$ be the scalar product on $\LZN$. For (\ref{komb1}) we can write using symmetry of $\Psi$
\beas\|\widehat{f}^\phi\widehat{n}^{\phi}\Psi\|^2
&=&\laa\Psi,(\widehat{f}^\phi)^{2}(\widehat{n}^{\phi})^2\Psi\raa=N^{-j}\sum_{k=1}^N\laa\Psi,(\widehat{f}^\phi)^{2}q_k^\phi\Psi\raa
\\&=&\laa\Psi,(\widehat{f}^\phi)^{2}q_1^\phi\Psi\raa
=\laa\Psi,q_1^\phi(\widehat{f}^\phi)^{2}q_1^\phi\Psi\raa=\|(\widehat{f}^\phi)
q^{\phi}_1\Psi\|^2\;.
 \eeas
Similarly we have for (\ref{komb2}) \beas \|\widehat{f}^\phi
(\widehat{n}^{\phi})^2\Psi\|^2 &=&\laa\Psi,(\widehat{f}^\phi
)^2(\widehat{n}^{\phi})^4\Psi\raa
=N^{-2}\sum_{j,k=1}^N\laa\Psi,(\widehat{f}^\phi )^2q_j^\phi
q_k^\phi\Psi\raa
\\&=&\frac{N-1}{N}\laa\Psi,(\widehat{f}^\phi )^2q_1^\phi
q_2^\phi\Psi\raa+N^{-1}\laa\Psi,(\widehat{f}^\phi
)^2q_1^\phi\Psi\raa
\\&=&\frac{N-1}{N} \|\widehat{f}^\phi q^{\phi}_1q^{\phi}_2\Psi\|+N^{-1}\|\widehat{f}^\phi
q^{\phi}_1\Psi\|
 \eeas
and (c) follows.

Using the definitions above we have for (d) \beas \widehat{f}^\phi
Q^\phi_j v(x_1,x_2)Q^\phi_k
&=&\sum_{l=0}^N f(l)P_l^\phi Q^\phi_jv(x_1,x_2)Q^\phi_k
\\&=& \sum_{l=0}^N f(l)P_{N-2,l-j}^\phi Q^\phi_jv(x_1,x_2)Q^\phi_k
\\&=& \sum_{l=k-j}^{N+k-j}  Q^\phi_jv(x_1,x_2)f(l+j-k)P_{N-2,l-k}^\phi
Q^\phi_k
\\&&\hspace{-3cm}= \sum_{l=k-j}^{N+k-j}  Q^\phi_jv(x_1,x_2)f(l+j-k)P_{l}^\phi Q^\phi_k
=Q^\phi_j v(x_1,x_2)\widehat{f}^\phi _{j-k}Q^\phi_k\;.
 \eeas

\end{proof}

\section{Derivation of the Gross-Pitaevskii equation}
As presented in \cite{pickl} we wish to control the functional $\alpha_N:(\LZN\times\LZ\to\mathbb{R}^+_0)$
given by $$\alpha_N(\Psi,\phi)=\langle\Psi,\widehat{m}^\phi\Psi\rangle$$
for some appropriate weight $m:\{0,\ldots,N\}\to\mathbb{R}^+_0$.

As mentioned above we shall need comparably strong conditions on the ``purity'' of the initial condensate to derive the Gross-Pitaevskii equation without positivity assumption
on the interaction. 
This is encoded in the weights we shall choose below (see Definition \ref{mk}). For these weights convergence of the respective $\alpha$ is
 stronger than $\mu^\Psi\to|\phi\rangle\langle\phi|$ in operator norm (see Lemma \ref{convlem}). 

Note that we shall allow rather general interactions (even negative interactions) and that the Theorem below is useless when the solution of the Gross-Pitaevskii equation
does not behave nicely. 
There is a lot of literature on solutions of nonlinear Schr\"odinger equation (see for example \cite{ginibre}) showing that at least for positive $a=\int v(x)d^3x$
our assumptions on the solutions of the Gross-Pitaevskii  equation can be satisfied for many different setups.

\begin{definition}\label{mk}
For any $0<\lambda<1$  we define the function
$m^\lambda :\{1,\ldots,N\}\to\mathbb{R}^+_0$ given by
$$m^\lambda (k):=\left\{
          \begin{array}{ll}
            k/N^{\lambda}, & \hbox{for $k\leq N^{\lambda}$;} \\
            1, & \hbox{else.}
          \end{array}
        \right.
$$ We define for any $N\in\mathbb{N}$ the functional $\alpha_N^\lambda:\LZN\times\LZ\to\mathbb{R}^+_0$ by
$$\alpha_N^\lambda(\Psi_N,\phi):=\laa\Psi_N,\mlf \Psi_N\raa=\|(\mlf)^{1/2}\Psi_N\|^2\;.$$
\end{definition}

With these definitions we arrive at the main Theorem:

\begin{theorem}\label{theorem}
Let $0<\lambda,\beta<1$, let $v_N^\beta(x)$ satisfy assumption \ref{v}. Let $0<T\leq \infty$, let $A^t$ be a time
dependent potential. Assume that for any
$N\in\mathbb{N}$  there exists a solution of the Schr\"odinger equation $\Psi_N^t$ and a $L^\infty$ solution  of the
Gross-Pitaevskii equation (\ref{meanfield}) $\phi^t$ with
$\Delta |\phi^t|^2\in L^2$ for all  $0\leq t\leq T$. Then 
$$\alpha_N^\lambda(\Psi_N^t,\phi^t)\leq e^{\int_{0}^t C_v \|\phi^s\|_\infty^2
ds}\alpha_N^\lambda(\Psi_N^0,\phi^0)+(e^{\int_{0}^t C_v \|\phi^s\|_\infty^2
ds}-1)K^{\phi^t}N^{-\delta_\lambda}\;,$$ where $\delta_\lambda=\frac{1}{2}\max\{1-\lambda-4\beta\;,\;\;-1+\lambda+3\beta\}$, $C_v$ is some constant depending on $v$ only and $$K^\phi:=C_v\left(\|\Delta|\phi|^2\|+\|\phi\|_\infty+1\right)\|\phi\|_\infty\;.$$

\end{theorem}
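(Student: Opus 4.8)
The plan is to follow the strategy of \cite{pickl}: estimate the time derivative $\frac{d}{dt}\alpha_N^\lambda(\Psi_N^t,\phi^t)$ and close a Gronwall-type inequality. Writing $\alpha_N^\lambda=\laa\Psi_N^t,\mlf\Psi_N^t\raa$, the derivative produces one term from the $N$-body Schr\"odinger evolution $i\dot\Psi_N^t=H_N\Psi_N^t$ and one term from the Gross-Pitaevskii evolution of $\phi^t$ acting on the projectors hidden in $\mlf$. Since $-\sum_j\Delta_j+\sum_j A^t(x_j)$ and $-(\Delta+A^t)$ commute with the respective single-particle structure in a compatible way, all one-body contributions cancel and only the pair interaction $\sum_{j<k}v_N^\beta(x_j-x_k)$ together with the mean-field term $a|\phi^t|^2$ survives. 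Using symmetry of $\Psi_N^t$ one reduces to the first pair of coordinates, so that
\bdm
\frac{d}{dt}\alpha_N^\lambda = -i\,\laa\Psi_N^t,\Big[\,\tfrac{N-1}{2}v_N^\beta(x_1-x_2)-a|\phi^t(x_1)|^2-a|\phi^t(x_2)|^2\;,\;\mlf\,\Big]\Psi_N^t\raa
\edm
up to harmless factors; here the commutator is what makes the one-body pieces drop out.

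Next I would insert $1=(p_1^\phi+q_1^\phi)(p_2^\phi+q_2^\phi)$ on both sides of the commutator and sort the resulting terms by how many $q$'s appear, exactly as in Lemma \ref{kombinatorik}(d). The diagonal $p_1p_2\,\cdot\,p_1p_2$ and the ``doubly bad'' $q_1q_2\,\cdot\,q_1q_2$ pieces are controlled by noting that $\mlf$ only changes slowly in the particle number (the key point of the weight $m^\lambda$, which has slope $N^{-\lambda}$ on $[0,N^\lambda]$), so the shifted operators $\widehat{f}^\phi_d$ from Definition \ref{defpro} differ from $\widehat{f}^\phi$ by $O(N^{-\lambda})$ in the relevant range, while the $L^\infty$ bound on $v_N^\beta$ times the volume of its support contributes the scaling factor $N^{-1+3\beta}\cdot N^{-3\beta}=N^{-1}$ against the $N$ from $\tfrac{N-1}{2}$. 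The genuinely dangerous terms are the off-diagonal ones with exactly one $q$, i.e. $p_1p_2\,\cdot\,q_1 q_2$ and $p_1 q_2 \,\cdot\, q_1 q_2$ (and adjoints). For these I would use the ``smearing'' trick: replace $v_N^\beta$ by a potential $h$ with $-\Delta h = v_N^\beta - \tfrac{N-1}{N}\cdot$(something) adjusted so that $\int h = \int v/N$ up to lower order — more precisely, one writes $v_N^\beta(x_1-x_2) = a|\phi^t|^2/\text{(counting)} + \nabla\cdot(\text{vector field})$ type decompositions, integrates by parts to move a derivative onto $\phi^t$ or onto $\Psi_N^t$, and pays a gradient bound. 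The gradient of $\Psi_N^t$ is controlled using the kinetic energy, which is why $\|\Delta|\phi^t|^2\|$ and $\|\phi^t\|_\infty$ enter $K^\phi$.

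Carrying out that step, one bounds each piece by something of the form $C_v\|\phi^t\|_\infty^2\,\alpha_N^\lambda + C_v(\cdots)N^{-\delta_\lambda}$, where the two exponents $1-\lambda-4\beta$ and $-1+\lambda+3\beta$ in $\delta_\lambda$ arise respectively from the $q_1q_2\cdot q_1q_2$-type estimate (where one wins a full $N^{-1}$ from the interaction normalization but loses $N^{3\beta}$ from the small support entering a Sobolev/convolution estimate and $N^\lambda$-type factors from the weight) and from the single-$q$ terms after the integration by parts (where the microscopic length $N^{-\beta}$ and the factor $N^\lambda$ combine differently). Then I would invoke the elementary Gronwall lemma: if $\frac{d}{dt}\alpha \le c(t)\alpha + c(t)d$ with $c(t)=C_v\|\phi^t\|_\infty^2$ and $d = K^{\phi^t}N^{-\delta_\lambda}$, then $\alpha(t)\le e^{\int_0^t c}\alpha(0)+(e^{\int_0^t c}-1)d$, which is exactly the claimed bound.

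The main obstacle is the single-$q$ term: unlike in the positive-interaction case, one cannot absorb it by a favorable sign, so one must extract an extra smallness by trading the interaction against the mean-field potential through the ``potential $h$'' construction and then controlling $\|\nabla_1 q_1^\phi\Psi_N^t\|$. This is where the restriction $\beta<1/3$ and the need for fast convergence of $\mu^{\Psi_0}$ (encoded in the steep weight $m^\lambda$ via the parameter $\lambda$) become unavoidable — the exponent $\delta_\lambda$ is positive only when $\lambda$ and $\beta$ can be chosen so that both $1-\lambda-4\beta>0$ and $-1+\lambda+3\beta<0$ hold simultaneously for a suitable adjoint pairing, which forces $\beta<1/3$ (and then $3\beta<\lambda<1-4\beta$ gives a nonempty window, noting $\delta_\lambda$ takes the max so we need at least one exponent positive, but for the theorem to be useful one wants to optimize over $\lambda$). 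I would spend most of the effort making the integration-by-parts estimate for the single-$q$ terms quantitative, since everything else is bookkeeping with Lemma \ref{kombinatorik}.
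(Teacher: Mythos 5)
Your high-level skeleton (compute $\dot\alpha_N^\lambda$, note the one-body cancellation, sort terms with $p_1,q_1,p_2,q_2$, then Gr\o nwall) matches the paper. But the technical core of your proposal diverges from what actually works, and in a way that matters.

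First, the ``smearing'' step is misplaced. The paper introduces $f_N^\beta$ with $\Delta f_N^\beta = Nv_N^\beta-a\delta$ to treat the \emph{one-$q$} term $p_1q_2\,h_{1,2}\,p_1p_2$ --- that is the term where the mean field is designed to cancel the pair interaction, via $p_1q_2\,(N-1)v_N^\beta - a|\phi|^2\,p_1p_2 = p_1q_2\,((N-1)v_N^\beta-a\delta)\,p_1p_2$. You instead assign the smearing to terms of the form $q_1q_2\,h\,p_1q_2$ and $q_1q_2\,h\,p_1p_2$. For those the paper uses entirely different mechanisms: a symmetrization $\sum_{j=2}^N v_N^\beta(x_1-x_j)$ to exploit the small support of $v_N^\beta$ against a symmetric $\Psi$, and an $L^2$-bound $\|v_N^\beta\|\sim N^{-1+3\beta/2}$ paired with the third $q$ and the steepness of $m^\lambda$. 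The diagonal $p_1p_2\cdots p_1p_2$ and $q_1q_2\cdots q_1q_2$ pieces you propose to estimate actually contribute nothing: they are real, and only the imaginary part survives.

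Second --- and this is the essential gap --- you claim one must ``move a derivative onto $\phi^t$ \emph{or onto} $\Psi_N^t$, and pay a gradient bound'' with the gradient of $\Psi_N^t$ ``controlled using the kinetic energy,'' and later that one must control $\|\nabla_1 q_1^\phi\Psi_N^t\|$. This is not how the proof goes and would in fact be fatal here. The paper never needs any regularity or energy bound on $\Psi_N^t$: because the relevant term is sandwiched between $p_1$'s, one uses $p_1 f(x_1-x_2)p_1 = p_1 (f\star|\phi|^2)(x_2)$ and then $(\Delta f_N^\beta)\star|\phi|^2 = f_N^\beta\star(\Delta|\phi|^2)$, so the Laplacian lands only on $|\phi|^2$; this is precisely how $\|\Delta|\phi|^2\|$ enters $K^\phi$. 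Avoiding any a priori kinetic energy estimate on $\Psi_N^t$ is what makes it possible to drop positivity of $v$: for a non-positive interaction one cannot expect to bound $\|\nabla\Psi_N^t\|$ from conservation of the total energy. If you followed your plan literally you would be stuck exactly where the positivity assumption would ordinarily be invoked.

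Finally, the attribution of the two exponents in $\delta_\lambda$ is reversed: $(1-\lambda-4\beta)/2$ comes from the one-$q$ mean-field-cancellation term after the smearing, while $(-1+\lambda+3\beta)/2$ comes from the three-$q$ term $q_1q_2\,h\,p_1q_2$, where the $L^2$-norm of $v_N^\beta$ together with the extra $q$'s and the $N^{-\lambda}$ slope of $m^\lambda$ produce the bound.
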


The proof of the Theorem shall be given below.

\begin{remark}
For $\beta<1/3$ one can choose $\lambda$ such that $\delta_\lambda$ is negative.

\end{remark}

\subsection{Convergence of the reduced density matrix}

In \cite{pickl} Lemma 2.2 it is shown that convergence of $\alpha_N(\Psi,\phi)\to 0$ is equivalent to convergence of the reduced 
one particle marginal density to $|\phi\rangle\langle\phi|$
in trace norm for many different weights. The weights we use here are not covered by that Lemma. Since $m^\lambda(k)\geq k/N$ for all $0\leq k\leq N$ and all $0<\lambda<1$ it follows that
$\alpha_N^\lambda(\Psi,\phi)\geq\laa\Psi,\widehat{n}^2\Psi\raa$ (recall that $n(k)=\sqrt{k/N}$). It follows with Lemma 2.2 in \cite{pickl} that for all $0<\lambda<1$

$$\lim_{N\to\infty}\alpha_N^\lambda(\Psi,\phi)=0\;\;\;\Rightarrow \lim_{N\to\infty}\mu^{\Psi}\to|\phi\rangle\langle\phi|\;\;\text{in operator norm.}$$
Therefore our result implies convergence of the respective reduced one particle marginal density. To be able to formulate Theorem \ref{theorem} under conditions of 
the reduced one particle marginal density we have the following Lemma

\begin{lemma}\label{convlem}
Let $0<\lambda<1$, $\xi<0$ and let $\|\mu^{\Psi}-|\phi\rangle\langle\phi|\|_{op}=\landau(N^{\xi})$. Then
$$\alpha_N^\lambda(\Psi,\phi)=\landau(N^{1-\lambda+\xi})\;.$$

\end{lemma}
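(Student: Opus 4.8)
The plan is to expand $\alpha_N^\lambda(\Psi,\phi) = \sum_{k=0}^N m^\lambda(k)\,\langle\Psi, P_k^\phi\Psi\rangle$ and to estimate the weighted sum by exploiting two facts: first, that $m^\lambda(k)\le 1$ always and $m^\lambda(k)\le k/N^\lambda$ on the low range $k\le N^\lambda$; second, that the operator-norm bound on $\mu^\Psi - |\phi\rangle\langle\phi|$ controls the expected number of bad particles. Concretely, by Lemma \ref{kombinatorik}(b) we have $\langle\Psi,(\widehat n^\phi)^2\Psi\rangle = N^{-1}\sum_{j=1}^N\langle\Psi,q_j^\phi\Psi\rangle = \langle\Psi, q_1^\phi\Psi\rangle$ by symmetry, and $\langle\Psi,q_1^\phi\Psi\rangle = 1 - \langle\phi,\mu^\Psi\phi\rangle = \langle\phi,(|\phi\rangle\langle\phi| - \mu^\Psi)\phi\rangle \le \|\mu^\Psi - |\phi\rangle\langle\phi|\|_{op} = \landau(N^\xi)$. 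So the expected fraction of bad particles is $\landau(N^\xi)$, i.e. $\sum_{k=0}^N (k/N)\langle\Psi,P_k^\phi\Psi\rangle = \landau(N^\xi)$.

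Now I would split the weight: since $m^\lambda(k) = (k/N^\lambda)\cdot\mathds{1}_{k\le N^\lambda} + \mathds{1}_{k > N^\lambda}$, and noting that on the range $k>N^\lambda$ one has $1 < k/N^\lambda$ as well, in fact $m^\lambda(k)\le k/N^\lambda$ for \emph{all} $k\in\{0,\dots,N\}$ — wait, that fails for large $k$ where $k/N^\lambda$ can exceed $1$, but that only makes the bound $m^\lambda(k)\le \min\{1, k/N^\lambda\} \le k/N^\lambda$ correct in the direction we need. Thus
\[
\alpha_N^\lambda(\Psi,\phi) \;=\; \sum_{k=0}^N m^\lambda(k)\,\langle\Psi,P_k^\phi\Psi\rangle
\;\le\; \sum_{k=0}^N \frac{k}{N^\lambda}\,\langle\Psi,P_k^\phi\Psi\rangle
\;=\; N^{1-\lambda}\sum_{k=0}^N \frac{k}{N}\,\langle\Psi,P_k^\phi\Psi\rangle
\;=\; N^{1-\lambda}\cdot\landau(N^\xi)
\;=\; \landau(N^{1-\lambda+\xi}),
\]
using the bound on the bad-particle fraction established above. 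This is exactly the claimed estimate.

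\textbf{Main obstacle.} There is essentially no analytic obstacle here; the content is the bookkeeping identity $m^\lambda(k)\le k/N^\lambda$ together with the translation of the operator-norm hypothesis into a bound on $\langle\Psi,q_1^\phi\Psi\rangle$ via Lemma \ref{kombinatorik}(b) and symmetry. The one point requiring a little care is the inequality $1 - \langle\phi,\mu^\Psi\phi\rangle \le \|\mu^\Psi - |\phi\rangle\langle\phi|\|_{op}$: one writes $1 - \langle\phi,\mu^\Psi\phi\rangle = \langle\phi, |\phi\rangle\langle\phi|\phi\rangle - \langle\phi,\mu^\Psi\phi\rangle = \langle\phi,(|\phi\rangle\langle\phi|-\mu^\Psi)\phi\rangle$, which is bounded by the operator norm of the difference since $\|\phi\|=1$. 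Everything else is a one-line summation, so I would present the argument compactly rather than belaboring constants.
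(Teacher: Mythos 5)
Your proof is correct and follows essentially the same route as the paper's: both reduce the claim to the two observations that $\langle\Psi,\widehat{n}^2\Psi\rangle=\langle\Psi,q_1^\phi\Psi\rangle=1-\langle\phi,\mu^\Psi\phi\rangle=\landau(N^\xi)$ and that $m^\lambda(k)\le N^{1-\lambda}\,k/N$ for all $k$. The only cosmetic difference is that you obtain $\langle\Psi,q_1^\phi\Psi\rangle=1-\|p_1^\phi\Psi\|^2$ directly from $p_1^\phi+q_1^\phi=1$, whereas the paper reaches the same identity by explicitly decomposing $\mu^\Psi$ into the four $p_1/q_1$ blocks.
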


\begin{proof}
Under the assumptions of the Lemma it follows that $\langle\phi,\mu^{\Psi}\phi\rangle=\landau(N^{\xi})$.
Writing\beas
\mu^{\Psi_N}&=&\int
\Psi_N(\cdot,x_2,\ldots,x_N)\Psi_N^*(\cdot,x_2,\ldots,x_N)d^{3N-3}x
\nonumber\\&=&\int
p_1^\phi\Psi_N(\cdot,x_2,\ldots,x_N)p_1^\phi\Psi_N^*(\cdot,x_2,\ldots,x_N)d^{3N-3}x
\\\nonumber&&+\int
q_1^\phi\Psi_N(\cdot,x_2,\ldots,x_N)p_1^\phi\Psi_N^*(\cdot,x_2,\ldots,x_N)d^{3N-3}x
\\\nonumber&&+\int
p_1^\phi\Psi_N(\cdot,x_2,\ldots,x_N)q_1^\phi\Psi_N^*(\cdot,x_2,\ldots,x_N)d^{3N-3}x
\\\nonumber&&+\int
q_1^\phi\Psi_N(\cdot,x_2,\ldots,x_N)q_1^\phi\Psi_N^*(\cdot,x_2,\ldots,x_N)d^{3N-3}x\eeas
and using $q_1^\phi\phi(x_1)=0$ it follows that
$\|p_1^\phi\Psi_N\|^2-1=\landau(N^{\xi})$. Using $p_1^\phi+q_1^\phi=1$ and Lemma \ref{kombinatorik} (c) 
$$\|q_1^\phi\Psi_N\|^2=\laa\Psi,\widehat{n}^2\Psi\raa=\llaa\Psi,\sum_{k=0}^N\frac{k}{N}P^\phi_k\Psi\rraa=\landau(N^{\xi})\;.$$
Since  $m^{\lambda}(k)\leq N^{1-\lambda} k/N $ for any $0\leq k \leq N$ it follows that $$\alpha_N^\lambda(\Psi,\phi)\leq N^{1-\lambda}\llaa\Psi,\sum_{k=0}^N\frac{k}{N}P^\phi_k\Psi\rraa=\landau(N^{1-\lambda+\xi})\;.$$

\end{proof}

\subsection{Proof of the Theorem}

In our estimates below we shall need from time to time the operator norm $\|\cdot\|_{op}$ defined for any linear operator $f:\LZN\to\LZN$ by
$$\|f\|_{op}:=\sup_{\|\Psi\|=1}\|f\Psi\|\;.$$ In particular we shall need the following Proposition

\begin{proposition}\label{propo}
\begin{itemize}\item[(a)]
For any $f\in \LZ$
$$\|f(x_1-x_2) p_1^\phi\|_{op}\leq \|\phi\|_\infty \|f\|\;.$$
\item[(b)]For any $g\in \LE$
$$\|p_1^\phi g(x_1-x_2) p_1^\phi\|_{op}\leq \|\phi\|_\infty^2 \|g\|_1\;.$$
\end{itemize}

\end{proposition}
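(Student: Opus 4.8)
\noindent\emph{Proof outline.} The plan is to exploit the rank-one structure of $p_1^\phi$ in the first variable, treating $x_2,\dots,x_N$ as spectator parameters, so that both bounds reduce to the elementary observation that $\int|f(x_1-x_2)|^2|\phi(x_1)|^2\,d^3x_1\leq\|\phi\|_\infty^2\|f\|^2$ uniformly in $x_2$. Throughout I would use that $\|\phi\|=1$ (which is what makes $p_1^\phi$ a projector) and that every $\Psi\in\LZN$ can be written as $p_1^\phi\Psi=\phi(x_1)\,\chi(x_2,\dots,x_N)$, where $\chi(x_2,\dots,x_N):=\int\phi^*(y)\Psi(y,x_2,\dots,x_N)\,d^3y$ satisfies $\|\chi\|=\|p_1^\phi\Psi\|\leq\|\Psi\|$.

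For (a) I would fix $\Psi$ with $\|\Psi\|=1$ and compute directly
\[
\|f(x_1-x_2)p_1^\phi\Psi\|^2=\int |f(x_1-x_2)|^2\,|\phi(x_1)|^2\,|\chi(x_2,\dots,x_N)|^2\,d^{3N}x .
\]
Carrying out the $x_1$-integral first (Fubini), it is at most $\|\phi\|_\infty^2\int|f(x_1-x_2)|^2\,d^3x_1=\|\phi\|_\infty^2\|f\|^2$ for every fixed $x_2$ by translation invariance of Lebesgue measure; what remains is $\|\phi\|_\infty^2\|f\|^2\int|\chi|^2\,d^{3(N-1)}x=\|\phi\|_\infty^2\|f\|^2\|\chi\|^2\leq\|\phi\|_\infty^2\|f\|^2$. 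Taking square roots yields (a).

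For (b) the quickest route is to deduce it from (a). Since $g\in\LE$ the function $|g|^{1/2}$ lies in $\LZ$, so with $g_1:=|g|^{1/2}$ and $g_2:=\mathrm{sgn}(g)\,|g|^{1/2}$ (where $\mathrm{sgn}(g):=g/|g|$ on $\{g\neq0\}$ and $0$ otherwise) one has $g=\overline{g_1}\,g_2$ and $\|g_1\|=\|g_2\|=\|g\|_1^{1/2}$. Using that $p_1^\phi$ is self-adjoint and that a multiplication operator passes to its adjoint by complex conjugation,
\beas
p_1^\phi\,g(x_1-x_2)\,p_1^\phi&=&\big(p_1^\phi\,\overline{g_1}(x_1-x_2)\big)\big(g_2(x_1-x_2)\,p_1^\phi\big)\\
&=&\big(g_1(x_1-x_2)\,p_1^\phi\big)^{*}\big(g_2(x_1-x_2)\,p_1^\phi\big),
\eeas
and hence, by submultiplicativity of $\|\cdot\|_{op}$ together with (a),
\beas
\|p_1^\phi\,g(x_1-x_2)\,p_1^\phi\|_{op}&\leq&\|g_1(x_1-x_2)\,p_1^\phi\|_{op}\,\|g_2(x_1-x_2)\,p_1^\phi\|_{op}\\
&\leq&\|\phi\|_\infty^2\,\|g_1\|\,\|g_2\|=\|\phi\|_\infty^2\|g\|_1 .
\eeas
Equivalently, one may argue directly as in (a): a short computation gives $p_1^\phi\,g(x_1-x_2)\,p_1^\phi\Psi=\phi(x_1)\,G(x_2)\,\chi(x_2,\dots,x_N)$ with $G(x_2):=\int|\phi(y)|^2 g(y-x_2)\,d^3y$, where $|G(x_2)|\leq\|\phi\|_\infty^2\|g\|_1$ uniformly in $x_2$, so that $\|p_1^\phi\,g(x_1-x_2)\,p_1^\phi\Psi\|^2\leq\|\phi\|_\infty^4\|g\|_1^2\|\chi\|^2\leq\|\phi\|_\infty^4\|g\|_1^2$.

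I do not expect a genuine obstacle here: both inequalities are immediate once the spectator variables are frozen and Fubini is invoked. The only mild points of care are the factorization of the $L^1$ kernel $g$ into two $L^2$ factors used in the reduction of (b) to (a) (which is possible precisely because $g\in L^1$ forces $|g|^{1/2}\in L^2$), and, in the direct argument, noticing that although $\chi$ depends on $x_2$ the factors $\phi(x_1)$ and $G(x_2)$ separate cleanly so that the $x_1$-integration genuinely decouples.
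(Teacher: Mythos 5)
Your proof is correct and follows essentially the same route as the paper: part (a) is the Cauchy--Schwarz/Fubini argument with $x_2$ frozen as a parameter, and part (b) reduces to (a) by splitting the $L^1$ kernel into two $L^2$ factors and using submultiplicativity of the operator norm. The only cosmetic difference is in (b): the paper first passes to $\|p_1^\phi\,|g|\,p_1^\phi\|_{op}$ and then factors $|g|=\sqrt{|g|}\sqrt{|g|}$, whereas you factor $g=\overline{g_1}g_2$ with $|g_1|=|g_2|=|g|^{1/2}$ directly, which neatly sidesteps the intermediate inequality (and incidentally also handles complex-valued $g$).
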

\begin{proof}

(a):
 Let $f\in \LZ$. Using the notation $p_1^\phi=|\phi(x_1)\rangle\langle\phi(x_1)|$
\beas
\|f(x_1-x_2) p_1^\phi\|_{op}^2&=&\sup_{\|\Psi\|=1}\|f(x_1-x_2) p_1^\phi\Psi\|^2
\\&=&\sup_{\|\Psi\|=1}\laa\Psi,|\phi(x_1)\rangle\langle\phi(x_1)|f(x_1-x_2)^2|\phi(x_1)\rangle\langle\phi(x_1)|
\Psi\raa\;.
\eeas
Using that $$\sup_{x_2\in\mathbb{R}^3}\langle\phi(x_1)|f(x_1-x_2)^2|\phi(x_1)\rangle\leq
\|\phi\|^2_\infty \|f\|^2
$$ and Cauchy Schwarz one gets
$$\|f(x_1-x_2) p_1^\phi\|_{op}^2\leq \sup_{\|\Psi\|=1} \|\Psi\|^2 \|\phi\|^2_\infty \|f\|^2\;.$$
(b): Let $g\in \LE$. 
\beas\|p_1^\phi g(x_1-x_2) p_1^\phi\|_{op}&\leq& \|p_1^\phi |g(x_1-x_2)| p_1^\phi\|_{op}
\\&=&\|p_1^\phi \sqrt{|g(x_1-x_2)|} \sqrt{|g(x_1-x_2)|}p_1^\phi\|_{op}
\\&\leq&\|\sqrt{|g(x_1-x_2)|}p_1^\phi\|_{op}^2\;.
\eeas
With (a) we get (b).

\end{proof}

We prove the Theorem using a Gr\o nwall argument. Therefore we estimate $\dot\alpha^\lambda_N(\Psi_N^t,\phi^t)$ in terms of $\alpha^\lambda_N(\Psi_N^t,\phi^t)$.
To get the estimates as stated in Theorem \ref{theorem} we need to show that
\be\label{haupt}
|\dot\alpha^\lambda_N(\Psi_N^t,\phi^t)|\leq C_v\|\phi^t\|_\infty^2\alpha^\lambda_N(\Psi_N^t,\phi^t)+K^{\phi^t}N^{-\delta}\;.
\ee
To shorten notation we use the following definitions:
\begin{definition}\label{defalpha}
Let
$$h_{j,k}:=N(N-1)v_N^{\beta}(x_j-x_k)-aN|\phi|^2(x_j)-aN|\phi|^2(x_k)\;.$$
We define the functional $\gamma^\lambda_N:\LZN\to\mathbb{R}$ by
\beas \gamma^\lambda_N(\Psi,\phi)%
&=&2\Im\left(\laa\Psi
,(\mlf_{-1}-\mlf)p_1q_2h_{1,2}p_1p_2 \Psi\raa\right)
\\&&+\Im\left(\laa\Psi ,q_1q_2h_{1,2}
(\mlf -\mlf_2)p_1p_2 \Psi\raa\right)\\&&+
2\Im\left(\laa\Psi
,(\mlf_{-1}-\mlf)q_1q_2h_{1,2}
p_1q_2 \Psi\raa\right) \;.\eeas
\end{definition}
$\gamma^\lambda_N$ was defined in such a way that for any solution of the Schr\"odinger equation $\Psi_N^t$ and any
solution $\phi^t$ of the Gross-Pitaevskii equation 
$\dot \alpha^\lambda_N(\Psi_N^t,\phi^t)=\gamma^\lambda_N(\Psi_N^t,\phi^t)$ (see Lemma \ref{ableitung} below). 
It is left to show that $\gamma^\lambda_N(\Psi_N^t,\phi^t)$ can be controlled by $\alpha^\lambda_N(\Psi_N^t,\phi^t)$ and $N^{-\delta}$ 
(which is done in Lemma \ref{alphaabl} below) to get (\ref{haupt}) and -- via Gr\o nwall -- the Theorem.

\begin{lemma}\label{ableitung}

For any solution of the Schr\"odinger equation $\Psi_N^t$, any
solution of the Gross-Pitaevskii equation $\phi^t$ and any $0<\lambda<1$ we have 
$$\dot \alpha^\lambda_N(\Psi_N^t,\phi^t)=\gamma^\lambda_N(\Psi_N^t,\phi^t)\;.$$
\end{lemma}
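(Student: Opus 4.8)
The strategy is the standard computation of the time derivative of a quadratic functional along the Schr\"odinger flow, combined with the fact that $\phi^t$ solves the Gross-Pitaevskii equation. Writing $\alpha^\lambda_N(\Psi^t_N,\phi^t)=\laa\Psi^t_N,\mlf\Psi^t_N\raa$ with $\mlf$ depending on time both through $\Psi^t_N$ and through $\phi^t$, I would first differentiate to get
\be\label{plan1}
\dot\alpha^\lambda_N(\Psi^t_N,\phi^t)=\laa\Psi^t_N,\big[\tfrac{d}{dt}\mlf\big]\Psi^t_N\raa+i\laa\Psi^t_N,[H_N,\mlf]\Psi^t_N\raa\;,
\ee
using $i\dot\Psi^t_N=H_N\Psi^t_N$. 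The first term comes only from the $t$-dependence of $\phi^t$; since $\mlf=\sum_j m^\lambda(j)P_j^\phi$ and $i\dot\phi^t=-(\Delta+A^t+a|\phi^t|^2)\phi^t$, differentiating the projectors $p_j^{\phi^t}$ produces an effective one-body generator, and one obtains $\tfrac{d}{dt}\mlf=i[\widehat{H}^\phi,\mlf]$ for a suitable one-body operator $\widehat{H}^\phi=\sum_j(-\Delta_j-A^t(x_j)-a|\phi^t|^2(x_j))$ acting through the $p_j,q_j$ structure (more precisely, only the nonlinear piece $-a|\phi^t|^2$ and the free/trap parts matter, and the $-\Delta_j-A^t$ pieces cancel against the corresponding parts of $H_N$).

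**Reduction to the interaction.** Combining the two terms in \eqref{plan1}, the kinetic and trap parts of $H_N$ cancel against $\widehat{H}^\phi$ (they are one-body and commute appropriately with $\mlf$ up to the $\phi$-evolution), leaving only
\be\label{plan2}
\dot\alpha^\lambda_N=i\Big\laa\Psi^t_N,\Big[\sum_{1\le j<k\le N}v_N^\beta(x_j-x_k)-a\sum_{j=1}^N|\phi^t|^2(x_j),\ \mlf\Big]\Psi^t_N\Big\raa\;.
\ee
Now I would insert $1=(p_1+q_1)(p_2+q_2)$ on both sides of the commutator and use the symmetry of $\Psi^t_N$ to replace $\sum_{j<k}$ by $\tfrac{N(N-1)}{2}$ times the $(1,2)$-term and $\sum_j|\phi|^2(x_j)$ by $N|\phi|^2(x_1)$ (or split symmetrically between coordinates $1$ and $2$), which is exactly the regrouping that produces $h_{1,2}=N(N-1)v_N^\beta(x_1-x_2)-aN|\phi|^2(x_1)-aN|\phi|^2(x_2)$ from Definition \ref{defalpha}. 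The diagonal terms $p_1p_2(\cdots)p_1p_2$ and $q_1q_2(\cdots)q_1q_2$ vanish because $\mlf$ commutes with $P_{N,k}$ and hence with $Q^\phi_j v Q^\phi_j$ by Lemma \ref{kombinatorik}(a),(d) with shift $0$; the surviving off-diagonal pieces $p_1p_2\leftrightarrow p_1q_2$, $p_1q_2\leftrightarrow q_1q_2$ and $p_1p_2\leftrightarrow q_1q_2$ are precisely the three terms of $\gamma^\lambda_N$. Here Lemma \ref{kombinatorik}(d) is what converts $\mlf Q_j v Q_k$ into $Q_j v\,\mlf_{j-k}Q_k$, generating the shifted weights $\mlf_{-1}$, $\mlf_2$ appearing in $\gamma^\lambda_N$; taking $2\Im$ of each off-diagonal contribution (since the commutator is $A-A^\dagger$) and collecting gives the stated formula.

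**Main obstacle.** The only genuinely delicate point is the first term in \eqref{plan1}: one must check carefully that the time derivative of $\mlf$ coming from $\dot\phi^t$ really equals $i[\widehat H^\phi,\mlf]$ with $\widehat H^\phi$ containing exactly $-\Delta-A^t-a|\phi^t|^2$, so that everything except the pair interaction and the compensating $a|\phi|^2$ terms cancels. Concretely, $\tfrac{d}{dt}p_j^{\phi^t}=|\dot\phi^t(x_j)\rangle\langle\phi^t(x_j)|+|\phi^t(x_j)\rangle\langle\dot\phi^t(x_j)|$, and substituting the GP equation rewrites this as a commutator $-i[h^{\phi}_j,p_j^\phi]$ with $h^\phi_j=-\Delta_j-A^t(x_j)-a|\phi^t|^2(x_j)$; one then sums over $j$ and uses that $\mlf$ is a function of the $P_k^\phi$'s to pass the commutator through. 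Once this bookkeeping is done, the cancellation of the $-\Delta_j-A^t(x_j)$ terms against $H_N$ is immediate, the $-a|\phi^t|^2(x_j)$ terms combine with the interaction to form $h_{1,2}$, and the decomposition into $Q^\phi_j$-sectors via Lemma \ref{kombinatorik} delivers exactly $\gamma^\lambda_N$. I expect no surprises in the algebra beyond this; it is essentially the computation from \cite{pickl} adapted to the present weight $m^\lambda$.
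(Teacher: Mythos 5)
Your plan reproduces the paper's strategy: differentiate $\alpha^\lambda_N$, get a commutator $[H_N-H^\phi_{GP},\widehat{m}]$ after the one-body pieces cancel, use symmetry to reduce the interaction to a single $h_{1,2}$, decompose with $1=p_1p_2+p_1q_2+q_1p_2+q_1q_2$, and invoke Lemma \ref{kombinatorik}(d) to shift the weights, exactly as in the paper. One concrete issue: you define $\widehat H^\phi=\sum_j(-\Delta_j-A^t(x_j)-a|\phi^t|^2(x_j))$ (reading \eqref{meanfield} literally), yet then claim ``the kinetic and trap parts of $H_N$ cancel against $\widehat H^\phi$''. With your sign convention they do \emph{not} cancel: $H_N$ has $+A^t(x_j)$, so $H_N-\widehat H^\phi$ would contain $2\sum_j A^t(x_j)$ rather than none. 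The paper's $H^\phi_{GP}=\sum_k(-\Delta_k+A(x_k)+a|\phi|^2(x_k))$ carries the opposite sign on the trap and nonlinearity, precisely so that $H_N-H^\phi_{GP}=\sum_{j<k}v_N^\beta(x_j-x_k)-a\sum_j|\phi|^2(x_j)$. You also write $\tfrac{d}{dt}\mlf=i[\widehat H^\phi,\mlf]$ in one place and $\tfrac{d}{dt}p_j^\phi=-i[h^\phi_j,p_j^\phi]$ with the same one-body $h^\phi_j$ in another, which is internally inconsistent. These are not fatal to the strategy, but they need to be resolved by fixing a sign convention and matching it to the definition of $H^\phi_{GP}$ used in the cancellation; otherwise the crucial reduction to the pair interaction fails. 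Finally, a small bookkeeping remark: when passing from the eight $Q_j\,h_{1,2}\,Q_k$ blocks to the three terms of $\gamma^\lambda_N$, besides the three diagonal blocks being real, the $p_1q_2\leftrightarrow q_1p_2$ cross term also drops out (it is invariant under adjoint plus $x_1\leftrightarrow x_2$ exchange), and the $q_1p_2\to p_1p_2$ block combines with $p_1q_2\to p_1p_2$ by symmetry to produce the factor~$2$; your proposal gets the right surviving terms but glosses over these cancellations.
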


\begin{proof}
Let $$H^\phi_{GP}:=\sum_{k=1}^N -\Delta_k+A(x_k)+a|\phi|^2(x_k)$$
be the sum of Gross-Pitaevskii Hamiltonians in each particle. It follows that
\be\label{ablp}\frac{d}{dt}\widehat{f}^{\phi^t}=i[H^{\phi^t}_{GP},\widehat{f}^{\phi^t}]\ee
for any function $f:\{0,\ldots,N\}\to\mathbb{R}$. For ease of notation we shall drop now the indices $\phi$ and $\lambda$ for the rest of the proof. With (\ref{ablp}) we get
\beas\dot\alpha _N(\Psi_N^t,\phi^t)
%
%
&=&i\laa\Psi_N^t
,\widehat{m}  H\Psi_N^t\raa-i\laa H\Psi_N^t
,\widehat{m}  \Psi_N^t\raa+i\laa \Psi_N^t
,[H_{GP},\widehat{m}]\Psi_N^t\raa
\\&=&-i\laa\Psi_N^t
,[H-H_{GP},\widehat{m}]\Psi_N^t\raa\;.
 \eeas
Using symmetry of $\Psi_N^t$ and selfadjointness of $h_{j,k}$ it follows that
\bea\label{wiehier} \nonumber\dot\alpha _N(\Psi_N^t,\phi^t)%
&=&-i(N^2-N)^{-1}\sum_{1\leq j< k\leq N}\laa\Psi_N^t
,[h_{j,k},\widehat{m}  \;]\Psi_N^t\raa
\\\nonumber&=&-\frac{i}{2}\left(\laa\Psi_N^t
,h_{1,2}\widehat{m}  \;\Psi_N^t\raa-\laa\Psi_N^t
,\widehat{m}  \;h_{1,2}\Psi_N^t\raa\right)
\\&=&\Im\left(\laa\Psi_N^t
,h_{1,2}\widehat{m}  \;\Psi_N^t\raa\right)\;.
\eea Note that we can write for any
$m:\{1,\ldots,N\}\to\mathbb{R}^+_0$ (remember that $P_{N,k}=0$
whenever $k<0$ or $k>N$)  \bea\label{nersetzen}
\nonumber\widehat{m}  &=&\sum_{k=0}^N m (k)P_k
\\\nonumber &=&\sum_{k=0}^{N-2}
\big(m (k)p_1  p_2  P_{N-2,k}+m (k)p_1 q_2 P_{N-2,k-1}\\\nonumber
&&+m (k)q_1  p_2 P_{N-2,k-1}+m (k)(1-p_1 q_2-q_1  p_2-p_1 p_2)
P_{N-2,k-2}\big)
\\\nonumber &=&\sum_{k=0}^{N}
\big(m (k)p_1  p_2  P_{N-2,k}+m (k)p_1 q_2 P_{N-2,k-1}\\\nonumber
&&+m (k)q_1  p_2 P_{N-2,k-1}+m (k)P_{N-2,k-2}\big)
\\\nonumber&&-\sum_{k=0}^{N}m (k+1)p_1  q_2P_{N-2,k-1}-m (k+1)q_1  p_2P_{N-2,k-1}\\\nonumber&&-m (k+2)p_1 p_2P_{N-2,k})
\\\nonumber
&=&(\widehat{m}  -\widehat{m}  _2)p_1p_2+(\widehat{m}  -\widehat{m}  _1)p_1q_2+(\widehat{m}  -\widehat{m}  _1)q_1p_2\\&&+\sum_{k=0}^Nm (k)P_{N-2,k-2}\;.
 \eea
Using  symmetry of $\Psi_N^t$ and selfadjointness of
$h_{1,2}P_{N-2,k-2}$ it follows that
\be\label{cancel} \dot\alpha _N(\Psi_N^t,\phi^t)%
=\Im\left(\laa\Psi_N^t ,h_{1,2}
\left((\widehat{m}  -\widehat{m}  _2)p_1p_2+2(\widehat{m}  -\widehat{m}  _1)p_1q_2\right)
\Psi_N^t\raa\right) \;.\ee
Since  $1=p_1 p_2+p_1 q_2+q_1 p_2+q_1 q_2$
\beas\dot\alpha _N(\Psi_N^t,\phi^t)&=&
 \Im\left(\laa\Psi ,p_1p_2h_{1,2}
(\widehat{m}  -\widehat{m}  _2)p_1p_2 \Psi\raa\right)
\\&&+\Im\left(\laa\Psi ,p_1q_2h_{1,2}
(\widehat{m}  -\widehat{m}  _2)p_1p_2 \Psi\raa\right)
\\&&+\Im\left(\laa\Psi ,q_1p_2h_{1,2}
(\widehat{m}  -\widehat{m}  _2)p_1p_2 \Psi\raa\right)
\\&&+\Im\left(\laa\Psi ,q_1q_2h_{1,2}
(\widehat{m}  -\widehat{m}  _2)p_1p_2 \Psi\raa\right)
\\&&+
 2\Im\left(\laa\Psi ,p_1p_2h_{1,2}
(\widehat{m}  -\widehat{m}  _1)p_1q_2 \Psi\raa\right)
\\&&+2\Im\left(\laa\Psi ,p_1q_2h_{1,2}
(\widehat{m}  -\widehat{m}  _1)p_1q_2 \Psi\raa\right)
\\&&+2\Im\left(\laa\Psi ,q_1p_2h_{1,2}
(\widehat{m}  -\widehat{m}  _1)p_1q_2 \Psi\raa\right)
\\&&+2\Im\left(\laa\Psi ,q_1q_2h_{1,2}
(\widehat{m}  -\widehat{m}  _1)p_1q_2 \Psi\raa\right)\;. \eeas Using
that
$\Im(\laa\Psi,A\Psi\raa)=-\Im(\laa\Psi,A^t\Psi\raa)$ for
any operator $A$ and that $\Psi$ is symmetric (note that $p_1
q_2h_{1,2}q_1  p_2$ is invariant under adjunction with simultaneous exchange
 of the variables $x_1$ and $x_2$) and Lemma \ref{kombinatorik} (d) we
get \beas\dot\alpha _N(\Psi_N^t,\phi^t)&=&\nonumber
2\Im\left(\laa\Psi ,p_1q_2h_{1,2}
(\widehat{m}  -\widehat{m}  _2)p_1p_2 \Psi\raa\right)
\\&&-2\Im\left(\laa\Psi ,p_1q_2
(\widehat{m}  -\widehat{m}  _1)h_{1,2}p_1p_2 \Psi\raa\right)
\\&&+\Im\left(\laa\Psi ,q_1q_2h_{1,2}
(\widehat{m}  -\widehat{m}  _2)p_1p_2 \Psi\raa\right)
\\&&+2\Im\left(\laa\Psi ,q_1q_2h_{1,2}
(\widehat{m}  -\widehat{m}  _1)p_1q_2 \Psi\raa\right) \;.\eeas 
Lemma \ref{kombinatorik} (d) applied to the first and fourth summand completes the proof.

\end{proof}

With Lemma \ref{ableitung} equation (\ref{haupt}) follows once we can control the different summands appearing in $\gamma^\lambda_N$ in  a suitable way. So the following Lemma
completes the proof of the Theorem.

\begin{lemma}\label{alphaabl}
Let $v_N^\beta$ satisfy assumption \ref{v}. Then there exists a $C<\infty$ such that for any $\phi\in L^\infty$ with $\Delta|\phi|^2\in L^2$
\begin{enumerate}
\item
$$\left|\laa\Psi
,(\mlf_{-1}-\mlf)p_1q_2h_{1,2}p_1p_2
\Psi\raa\right|
\leq K^\phi N^{\delta_\lambda}$$
\item
$$\left|\laa\Psi ,q_1q_2h_{1,2}
(\mlf -\mlf_2)p_1p_2 \Psi\raa\right|
\leq C\|\phi\|_\infty^2\alpha^\lambda_N(\Psi,\phi)+ K^\phi N^{\delta_\lambda}
$$
\item
$$\left|\laa\Psi
,(\mlf_{-1}-\mlf)q_1q_2h_{1,2}
p_1q_2 \Psi\raa\right|
\leq  K^\phi N^{\delta_\lambda} $$
\end{enumerate}
with $\delta_\lambda$ and $K^\phi$ as in Theorem \ref{theorem}.

\end{lemma}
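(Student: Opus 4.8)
\textbf{Proof plan for Lemma \ref{alphaabl}.}
The strategy is to estimate each of the three terms by inserting resolutions of the identity, exploiting the weight shift identity of Lemma \ref{kombinatorik}(d), and then splitting $h_{1,2}$ into its ``singular'' part $N(N-1)v_N^\beta(x_1-x_2)$ and its ``mean field'' part $-aN|\phi|^2(x_1)-aN|\phi|^2(x_2)$. For each piece one isolates a factor $q_j^\phi$ (there is always at least one, since every term in $\gamma^\lambda_N$ contains a $q_1$ or a $q_2$), uses Lemma \ref{kombinatorik}(c) to convert $\|\widehat f q_1\Psi\|$ and $\|\widehat f q_1 q_2\Psi\|$ into $\|\widehat f\,\widehat n\,\Psi\|$ and $\|\widehat f\,(\widehat n)^2\Psi\|$, and finally compares the weight $\widehat m^{\lambda}$ and the shifted weights $\widehat m^{\lambda}_{\pm 1},\widehat m^{\lambda}_{2}$. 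The key elementary facts about the weights are (i) $m^\lambda(k)-m^\lambda(k+1)$ is supported on $k\le N^\lambda$ and bounded by $N^{-\lambda}$, so $\widehat m^\lambda-\widehat m^\lambda_{-1}$ (and similarly with shift $2$) is an operator of norm $\landau(N^{-\lambda})$ that moreover annihilates the sector $k>N^\lambda$; and (ii) on the sector $k\le N^\lambda$ one has $(\widehat n)^2\le N^{\lambda-1}$, while on the remaining sector $m^\lambda=1$ so that $\widehat n\le (\mlf)^{1/2}$. Combining these, any expression of the form $\|(\widehat m^\lambda-\widehat m^\lambda_{-1})\,(\widehat n)^j\Psi\|$ is bounded either by $N^{-\lambda}\cdot N^{(\lambda-1)j/2}$ times a norm, or by $\alpha_N^\lambda(\Psi,\phi)^{1/2}$ times a small power of $N$, which is exactly the dichotomy producing $\delta_\lambda=\tfrac12\max\{1-\lambda-4\beta,\;-1+\lambda+3\beta\}$.

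For part (a), the term is $\laa\Psi,(\mlf_{-1}-\mlf)p_1q_2 h_{1,2}p_1p_2\Psi\raa$. Here one uses $p_1p_2$ on the right, so the mean field part $-aN|\phi|^2(x_1)-aN|\phi|^2(x_2)$ and the true interaction part $N(N-1)v_N^\beta$ should largely cancel: because $p_2 |\phi|^2(x_2)p_2$ and $p_2 v_N^\beta(x_1-x_2)p_2$ both act on the last variable as multiplication by a constant (the integral of $|\phi|^2$, resp.\ the convolution $v_N^\beta * |\phi|^2$ evaluated at $x_1$), and $\int v_N^\beta = N^{-1}a$, the combination $p_1p_2 h_{1,2}p_1p_2$ is $\landau(1)$ in operator norm after using $\Delta|\phi|^2\in L^2$ to control the difference $v_N^\beta*|\phi|^2 - N^{-1}a|\phi|^2$ (this is where the constant $\|\Delta|\phi|^2\|$ in $K^\phi$ enters, via a second-order Taylor estimate on the short-range kernel $v_N^\beta$ whose width is $N^{-\beta}$). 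One is then left with $p_1 q_2$ on one side and $p_1 p_2$ on the other, giving one $q$-factor; applying Proposition \ref{propo}(a) or (b) to bound $\|v_N^\beta(x_1-x_2)p_1\|_{op}\le \|\phi\|_\infty \|v_N^\beta\| = \|\phi\|_\infty N^{-1+3\beta/2}\|v\|$, together with $N(N-1)\le N^2$ and the $q_2$ converted via Lemma \ref{kombinatorik}(c), yields the claimed $K^\phi N^{\delta_\lambda}$.

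For part (b), the term $\laa\Psi,q_1q_2 h_{1,2}(\mlf-\mlf_2)p_1p_2\Psi\raa$ has two $q$-factors on the left but only the weight difference $\mlf-\mlf_2$ available; the natural move is to write $h_{1,2}=N(N-1)v_N^\beta - aN|\phi|^2(x_1)-aN|\phi|^2(x_2)$ and handle the interaction piece by an additional insertion $1=(p_1+q_1)(p_2+q_2)$ between $v_N^\beta$ and $p_1p_2$, or by the symmetrization $\|v_N^\beta(x_1-x_2)^{1/2}p_1\|_{op}^2$ trick, so that one factor of $v$ is paired with each side. Using Cauchy–Schwarz to split into $\|\,(\cdot)\,q_1q_2\Psi\|$ times $\|(\cdot)(\mlf-\mlf_2)p_1p_2\Psi\|$, estimating the first factor via Lemma \ref{kombinatorik}(c) in terms of $\alpha_N^\lambda(\Psi,\phi)$, and the second via Proposition \ref{propo} plus the weight estimate (i)–(ii), produces the $C\|\phi\|_\infty^2\alpha_N^\lambda(\Psi,\phi)$ term and a remainder $K^\phi N^{\delta_\lambda}$. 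Part (c), $\laa\Psi,(\mlf_{-1}-\mlf)q_1q_2 h_{1,2}p_1q_2\Psi\raa$, is similar to (a) but now with $q$-factors on both sides (three in total counting multiplicity), so every interaction factor can be paired with a projector and the weight difference supplies the decisive $N^{-\lambda}$; a Cauchy–Schwarz split, Lemma \ref{kombinatorik}(c), and Proposition \ref{propo}(a) give $K^\phi N^{\delta_\lambda}$ with no surviving $\alpha$-term.

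The main obstacle is part (a): unlike (b) and (c), it has only \emph{one} $q$-factor, so there is no room to extract an extra smallness from Lemma \ref{kombinatorik}(c), and a naive bound using only $\|v_N^\beta\|_1 = N^{-1}a\|v\|_1$ or $\|v_N^\beta\| = N^{-1+3\beta/2}\|v\|_2$ together with the prefactor $N(N-1)\sim N^2$ is \emph{not} small. The cancellation between $N(N-1)v_N^\beta(x_1-x_2)$ and $aN|\phi|^2$ sandwiched in $p_1p_2$ is therefore essential, and quantifying it requires the regularity input $\Delta|\phi|^2\in L^2$ to bound $\|\,(v_N^\beta * |\phi|^2)(x_1) - N^{-1}a\,|\phi|^2(x_1)\,\|$ by $N^{-1-2\beta}\|v\|_1\|\Delta|\phi|^2\|$ (the error in replacing a mollifier of width $N^{-\beta}$ by a delta). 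It is precisely the competition between this $N^{-2\beta}$ gain and the $N^{-\lambda}$ from the weight difference that forces the two cases in $\delta_\lambda$ and the restriction $\beta<1/3$.
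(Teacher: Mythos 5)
Your outline correctly locates the main physical point of part (a) — that $p_1q_2\,h_{1,2}\,p_1p_2$ is small because $p_1(N v_N^\beta(x_1-x_2))p_1 = p_1\bigl(Nv_N^\beta\star|\phi|^2\bigr)(x_2)$ nearly cancels $p_1\,a|\phi|^2(x_2)$, and that the error is quantified by $\Delta|\phi|^2\in L^2$ with a gain $N^{-2\beta}$ (the paper implements exactly this by solving $\Delta f_N^\beta=Nv_N^\beta-a\delta$, integrating by parts, and applying Young, rather than a Taylor expansion, but the scaling is the same). You also correctly identify the basic toolbox: Lemma \ref{kombinatorik}(c)--(d), Proposition \ref{propo}, and the $N^{-\lambda}$ coming from the weight difference.

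However, there is a genuine gap in your treatment of part (b). After the mean-field terms drop (since $q_jw(x_j)p_j=0$), what remains is
$(N^2-N)\laa\Psi,q_1q_2(\mlf_{-2}-\mlf)^{1/2}v_N^\beta(x_1-x_2)(\mlf-\mlf_2)^{1/2}p_1p_2\Psi\raa$.
Neither of the two devices you propose — inserting $1=(p_1+q_1)(p_2+q_2)$, or the square-root splitting $\|\sqrt{|v_N^\beta|}\,p_1\|_{op}^2$ — yields a bound that closes. The insertion is vacuous because $q_j$ and $p_j$ annihilate each other; and the square-root trick together with Cauchy--Schwarz and $\|v_N^\beta p_1\|_{op}\le\|\phi\|_\infty\|v_N^\beta\|_2=\|\phi\|_\infty\|v\|_2N^{-1+3\beta/2}$ leaves a prefactor $N^2\cdot N^{-1+3\beta/2}$, and even after spending the single available factor $\|(\mlf_{-2}-\mlf)^{1/2}q_1q_2\Psi\|\lesssim\sqrt{\alpha/N}$ and the weight factor $N^{-\lambda/2}$ you land on $N^{1/2+3\beta/2-\lambda/2}\sqrt{\alpha}$; by AM--GM the non-$\alpha$ remainder scales as $N^{1+3\beta-\lambda}$, which is never $\landau(N^{\delta_\lambda})$ for admissible $0<\lambda<1$. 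The paper's proof hinges on an additional, essential idea: exploiting the bosonic symmetry of $\Psi$ to replace $(N-1)v_N^\beta(x_1-x_2)$ by $\sum_{k=2}^N v_N^\beta(x_1-x_k)$ before applying Cauchy--Schwarz, so that the second factor becomes $\|\sum_{k=2}^N q_k v_N^\beta(x_1-x_k)p_1p_k(\mlf-\mlf_2)^{1/2}\Psi\|$; expanding its square into diagonal and off-diagonal pieces and applying Proposition \ref{propo} then beats the naive operator-norm bound by roughly a factor $\sqrt{N}$. Without this particle-symmetrization step the estimate in (b) does not close.

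A secondary inaccuracy: you attribute the two branches in $\delta_\lambda=\tfrac12\max\{1-\lambda-4\beta,\,-1+\lambda+3\beta\}$ to a competition within part (a) between the $N^{-2\beta}$ gain and the $N^{-\lambda}$ weight gain. In fact part (a) alone gives only the branch $\tfrac12(1-\lambda-4\beta)$, while the branch $\tfrac12(-1+\lambda+3\beta)$ comes from part (c), where the weight difference is paired with $\widehat n_1$ and $\widehat n^{-1}$ shifts and the unsigned $\|h_{1,2}\|\sim N\|v_N^\beta\|$ is estimated crudely. The constraint $\beta<1/3$ arises because one needs simultaneously $\lambda>1-4\beta$ (from (a)) and $\lambda<1-3\beta$ (from (c)) with $\lambda\in(0,1)$; the latter requires $\beta<1/3$.
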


Before we prove the Lemma a few words on (a) and (c) first:
It is (a) which is physically the most important. Here the
mean
 field cancels out most of the interaction. The
central point in the mean field argument is observing that
$p_1q_2h_{1,2}p_1p_2$ is small.

For (c) the choice of the weights $m^\lambda$ plays an important role. Note that we only have one projector $p$ here and $\|q_1q_2h_{1,2}
p_1q_2\|_{op}$ can not be controlled by the $L^1$-norm of $v$ (see Proposition \ref{propo}).
On the other hand we have altogether three projectors $q$ in (c). Assuming that the condensate is very clean (which is encoded in $\widehat{m}^\lambda $) these $q$'s make (c) small.


\begin{proof}

In the proof we shall drop the index $\lambda$ and $\phi$ for ease of notation. Constants appearing in estimates will generically be denoted by $C$. We shall not distinguish constants appearing in a sequence of estimates, i.e. in $X\leq CY\leq CZ$ the constants may differ.

 In bra-ket notation $p_1=|\phi(x_1)\rangle\langle\phi(x_1)|$. Writing $\star$ for the convolution we get for any $f:\mathbb{R}^6\to\mathbb{R}$
\be\label{neu} p_1f(x_1-x_2)p_1=|\phi(x_1)\rangle\langle\phi(x_1)|f(x_1-x_2)|\phi(x_1)\rangle\langle\phi(x_1)|=p_1 (f\star|\phi|^2)(x_2)\;,\ee in particular
$$p_1\delta(x_1-x_2)p_1=p_1 |\phi(x_2)|^2\;.$$
With $p_1q_1=0$ it follows that
\beas p_1q_2h_{1,2}p_1p_2&=&Np_1q_2\left((N-1)v_N^{\beta}(x_1-x_2)-a|\phi|^2(x_2)\right)p_1p_2
\\&=&Np_1q_2\left((N-1)v_N^{\beta}(x_1-x_2)-a\delta(x_1-x_2)\right)p_1p_2\;.\eeas
Using this and triangle inequality
the left hand side of (a) is bounded by
\bea\label{firstbound}\nonumber&&N|\laa\Psi
,(\widehat{m}_{-1}-\widehat{m})p_1q_2\left(Nv_N^{\beta}(x_1-x_2)-a\delta(x_1-x_2)\right)p_1p_2
\Psi\raa|\\&&\;\;\;\;\;+\;\;N|\laa\Psi
,(\widehat{m}_{-1}-\widehat{m})p_1q_2v_N^{\beta}(x_1-x_2)p_1p_2
\Psi\raa|\;.
\eea
To control the first summand we define the function $f_N^{\beta}:\mathbb{R}^3\to\mathbb{R}$ by
$$\Delta f_N^{\beta}=Nv_N^{\beta}-a\delta\;.$$
Recall that $v$ is compactly supported. Since $N\int v(x)d^3x=a$ the integration constant of $f_N^{\beta}$ can be chosen such that also $f_N^{\beta}$ has compact support. Using the scaling behavior of $v_N^{\beta}$ it follows that there exits a function $f:\mathbb{R}^3\to\mathbb{R}$ with
$$f_N^{\beta}=N^{\beta}f(N^{\beta}x)\;\;\;\;\;\;\text{and}\;\;\;\;\;\;\|f_N^{\beta}\|_1=N^{-2\beta}\|f\|_1\;.$$
Now we can estimate the first summand in (\ref{firstbound}) using (\ref{neu})
\beas
&&N|\laa\Psi
,(\widehat{m}_{-1}-\widehat{m})p_1q_2\Delta f_N^{\beta}(x_1-x_2)p_1p_2
\Psi\raa|
\\&=&N|\laa\Psi
,(\widehat{m}_{-1}-\widehat{m})p_1q_2\left((\Delta f_N^{\beta})\star |\phi|^2 \right)(x_2)p_1p_2
\Psi\raa|
\\&=&N|\laa\Psi
,(\widehat{m}_{-1}-\widehat{m})p_1q_2 \left(f_N^{\beta}\star (\Delta|\phi|^2)\right) (x_2)p_1p_2
\Psi\raa|\;.
\eeas
Since  $\|p_1p_2
\Psi\|\leq 1$ one gets with Proposition \ref{propo}
\beas
&\leq&N\|(\widehat{m}_{-1}-\widehat{m})q_2\Psi\|\;\| p_1 \left(f_N^{\beta}\star (\Delta|\phi|^2)\right) (x_2)p_1\|_{op}
\\&\leq&N\|(\widehat{m}_{-1}-\widehat{m})q_2\Psi\|\;\|f_N^{\beta}\star (\Delta|\phi|^2)\|\;\|\phi\|_\infty
\;.
\eeas
In view of Lemma \ref{defpro} (b) we have using symmetry of $\Psi$ for the first factor \bea\label{mdiffop}
\|(\widehat{m}_{-1}-\widehat{m})q_2\Psi\|
&=&\nonumber\|(\widehat{m}_{-1}-\widehat{m})\widehat{n}\Psi\|
\\&\leq&\sup_{0\leq k\leq N^\lambda } \left(\left|\frac{k-1}{N^\lambda}-\frac{k}{N^\lambda}\right|\sqrt{k/N}\right)=(N^\lambda  N)^{-1/2}\;.
\eea
Using Young's inequality we have for the second factor
\beas
\| f_N^{\beta}\star (\Delta|\phi|^2)\|\leq \| f_N^{\beta}\|_1\;\|\Delta|\phi|^2\|\:\leq\;CN^{-2\beta}\;\|\Delta|\phi|^2\|\;.
\eeas
It follows that the first summand of (\ref{firstbound}) is bounded by \be\label{bound11}
CN^{-\lambda/2}\|\Delta|\phi|^2\|\;\|\phi\|_\infty N^{1/2-2\beta}\;.
\ee
Using Schwarz inequality, then Proposition \ref{propo} and equation (\ref{mdiffop}) the second summand of (\ref{firstbound}) is smaller than
\beas
&&N\|(\widehat{m}_{-1}-\widehat{m})q_2\Psi\|\;\|p_1v_N^{\beta}(x_1-x_2)p_1\|_{op}
\\&\leq&N\|(\widehat{m}_{-1}-\widehat{m})q_2\Psi\|\;\|v_N^{\beta}\|_1\;\|\phi\|_\infty^2
\leq C(N^\lambda  N)^{-1/2}\|\phi\|_\infty^2\;.
\eeas
Using this and (\ref{bound11}) we get (a).


For (b) we use first that $q_1q_2 w(x_1)p_1p_2=0$ for any function $w$.
It follows with Lemma \ref{defpro} (d) that
\bea\label{suffsym}&&\laa\Psi,q_1q_2h_{1,2}
(\widehat{m}-\widehat{m}_2)p_1p_2 \Psi\raa
\\\nonumber&=&(N^2-N)\laa\Psi
,q_1q_2(\widehat{m}  _{-2}-\widehat{m}  )^{1/2}v_N^{\beta}(x_1-x_2)
(\widehat{m}  -\widehat{m}  _2)^{1/2}p_1p_2 \Psi\raa\;.
\eea
Before we estimate this term note that the operator norm of $q_1q_2v_N^{\beta}(x_1-x_2)
$ restricted to the subspace of
symmetric functions is much smaller than the operator norm on full
$\LZN$. This comes from the fact that $v_N(x_1-x_2)$ is only nonzero in a small area where $x_1\approx x_2$. A non-symmetric wave function may be fully localized in that area, whereas for a symmetric wave function only a small part lies in that area. 
To get sufficiently good control of (\ref{suffsym}) we symmetrize $(N-1)v_N^{\beta}(x_1-x_2)$ replacing it by 
$\sum_{k=2}^Nv_N^{\beta}(x_1-x_k)$ and get
 \beas(\ref{suffsym})
&=&(N^2-N)\laa\Psi
,q_1q_2(\widehat{m}  _{-2}-\widehat{m}  )^{1/2}v_N^{\beta}(x_1-x_2)
(\widehat{m}  -\widehat{m}  _2)^{1/2}p_1p_2 \Psi\raa
\\\nonumber&=&N\laa\Psi ,(\widehat{m}  _{-2}-\widehat{m}  )^{1/2}\sum_{j=2}^Nq_1q_jv_N^{\beta}(x_1-x_j)
p_1p_j (\widehat{m}  -\widehat{m}  _2)^{1/2}\Psi\raa
%
%
\\&\leq&N\|(\widehat{m}  _{-2}-\widehat{m}  )^{1/2}q_1\Psi\|\;\|\sum_{j=2}^Nq_jv_N^{\beta}(x_1-x_j)
p_1p_j (\widehat{m}  -\widehat{m}  _2)^{1/2}\Psi\|\;. \eeas For the first
factor we have since $\left(m (k)-m (k-2)\right)k/N\leq 2N^{-1}m (k)$ in view
of Lemma \ref{kombinatorik} (c) that \beas
\|(\widehat{m}  _{-2}-\widehat{m}  )^{1/2}q_1\Psi\|^2\;=\;\laa\Psi(\widehat{m}  _{-2}-\widehat{m}  )\widehat{n}^2\Psi\raa
\;\leq\;2N^{-1}\alpha _N(\Psi,\phi)\;.
\eeas The second factor is bounded by \bea\label{zweifreunde}
%
&&\hspace{-0.55cm}\sum_{2\leq j< k\leq N}^N\laa(\widehat{m}  -\widehat{m}  _2)^{1/2}\Psi, p_1p_j
v_N^{\beta}(x_1-x_j)q_jq_kv_N^{\beta}(x_1-x_k)(\widehat{m}  -\widehat{m}  _2)^{1/2}p_1p_k\Psi\raa\nonumber\\&&\;\;\;+\sum_{k=2}^N\|q_kv_N^{\beta}(x_1-x_k)p_1p_k(\widehat{m}  -\widehat{m}  _2)^{1/2}\Psi\|^2\;.
 \eea
Using symmetry and Proposition \ref{propo} the first summand in (\ref{zweifreunde}) is bounded by \beas&&\hspace{-1cm}
N^2\laa(\widehat{m}  -\widehat{m}  _2)^{1/2}\Psi, p_1p_2q_3
v_N^{\beta}(x_1-x_2)v_N^{\beta}(x_1-x_3)p_1q_2p_3(\widehat{m}  -\widehat{m}  _2)^{1/2}\Psi\raa
\\&\leq&N^2\|\sqrt{|v_N^{\beta}(x_1-x_2)|}\sqrt{|v_N^{\beta}(x_1-x_3)|}p_1q_2p_3(\widehat{m}  -\widehat{m}  _2)^{1/2}\Psi\|^2
\\&\leq&N^2\|\sqrt{|v_N^{\beta}(x_1-x_2)|}p_1\|^4_{op}\;\|(\widehat{m}  -\widehat{m}  _2)^{1/2}q_2\Psi\|^2
\\&\leq&N^2\|\phi\|_\infty^4
\|v_N^{\beta}\|_1^2\|(\widehat{m}  -\widehat{m}  _2)^{1/2}q_2\Psi\|^2
\\&\leq&C\|\phi\|_\infty^4\alpha _N(\Psi,\phi)
\;. \eeas
Using Proposition \ref{propo} the second summand in (\ref{zweifreunde}) can be controlled by \beas
&&N\laa(\widehat{m}  -\widehat{m}  _2)^{1/2}\Psi, p_1p_2
(v_N^{\beta}(x_1-x_2))^2p_1p_2(\widehat{m}  -\widehat{m}  _2)^{1/2}\Psi\raa
\\&\leq& N
\|p_1(v_N^{\beta}(x_1-x_2))^2p_1 \|_{op}\;\|(\widehat{m}  -\widehat{m}  _2)^{1/2}\|_{op}^2 
\\&\leq& N\|\phi\|^2_\infty\;
\|v_N^{\beta}\|_2^2\;\|(\widehat{m}  -\widehat{m}  _2)^{1/2}\|_{op}^2 \leq
C\|\phi\|^2_\infty NN^{-2+3\beta} N^{-\lambda}\;. \eeas It follows that (b) is bounded by

\beas C\|\phi\|_\infty^2\alpha _N(\Psi,\phi)+ C\|\phi\|_\infty N^{-1/2+3/2\beta}
N^{-\lambda/2}\;.\eeas


Next we shall prove (c).
Using Lemma \ref{defpro} (d) and Cauchy-Schwarz we get for the left hand side of (c)
\beas
&&\hspace{-1cm}\left|\laa\Psi
,(\widehat{m}_{-1}-\widehat{m})\widehat{n}_1q_1q_2h_{1,2}
\widehat{n}^{-1}p_1q_2 \Psi\raa\right|
\\&& \hspace{1cm}\leq \|(\widehat{m}_{-1}-\widehat{m})\widehat{n}_1q_1q_2\Psi\|\;
\|h_{1,2}\widehat{n}^{-1}p_1q_2 \Psi\|\;.
\eeas
For the first factor we have using Lemma \ref{defpro} (c)
\beas
\|(\widehat{m}_{-1}-\widehat{m})\widehat{n}_1q_1q_2\Psi\|&\leq&\frac{N}{N-1}\|(\widehat{m}_{-1}-\widehat{m})\widehat{n}_1\widehat{n}^2\Psi\|
\\&\leq&\sup_{0\leq k\leq N^\lambda }\left(\frac{N}{N-1}\left|\frac{k-1}{N^\lambda }-\frac{k}{N^\lambda  }\right|\sqrt{(k+1)/N}\frac{k}{N}\right)
\\&=&\frac{\sqrt{N^\lambda +1}}{(N-1)\sqrt{N}}\;.
\eeas
For the second factor  we have using Proposition \ref{propo} and Lemma \ref{kombinatorik} (c)
\beas
\|h_{1,2}\widehat{n}^{-1}p_1q_2 \Psi\|&\leq&\|h_{1,2}p_1\|_{op}\;\|\widehat{n}^{-1}q_2 \Psi\|
\\&\leq&\|\phi\|_\infty\;\|h_{1,2}\|\leq N\|\phi\|_\infty\;\left((N-1)\|v_N^\beta\|+\|2a|\phi|^2\|\right)\;.
\eeas
Since the scaling of $v_N^\beta$ is such that $\|v_N^\beta\|=\|v\|N^{-1+3/2\beta}$  it follows that (c) is bounded by
\beas&&\hspace{-1cm}CN\frac{\sqrt{N^\lambda +1}}{(N-1)\sqrt{N}} (N-1)N^{-1+3/2\beta}( \|\phi\|_\infty+ \|\phi\|_\infty^2)\\&&\hspace{-1cm}\leq C ( \|\phi\|_\infty+ \|\phi\|_\infty^2)N^{\lambda/2}N^{-1/2+3/2\beta}\eeas
and (c) follows.
\beas
\eeas

\end{proof}

\end{document}